\newcommand{\bse}{\begin{subequations}}
\newcommand{\ese}{\end{subequations}}
\newtheorem{theorem}{Theorem}
\numberwithin{equation}{section}
\DeclareSymbolFont{largesymbols}{OMX}{yhex}{m}{n}
\DeclareMathAccent{\Widehat}{\mathord}{largesymbols}{"62}
\title[Breathers and rogue waves on the double-periodic background for the reverse-space-time derivative nonlinear Schr\"{o}dinger equation]{Breathers and rogue waves on the double-periodic background for the reverse-space-time derivative nonlinear Schr\"{o}dinger equation}
\author{Huijuan Zhou}
\address[HZ]{School of Mathematical Sciences, Shanghai Key Laboratory of Pure Mathematics and Mathematical Practice, and Shanghai Key Laboratory of Trustworthy Computing \\
East China Normal University \\ Shanghai 200241 \\ China}
\author{Yong Chen$^*$}
\address[YC]{School of Mathematical Sciences, Shanghai Key Laboratory of Pure Mathematics and Mathematical Practice, and Shanghai Key Laboratory of Trustworthy Computing \\
East China Normal University \\ Shanghai 200241 \\ China}
\address[YC]{College of Mathematics and Systems Science \\ Shandong University of Science and Technology \\ Qingdao 266590 \\ China}
\email{ychen@sei.ecnu.edu.cn}
\begin{document}

\begin{abstract}

In the present investigation, the solutions on the periodic and double-periodic background are successfully constructed by Darboux transformation using a plane wave seed solution. Firstly, the Darboux transformation for the reverse-space-time DNLS equation is constructed. Secondly, periodic solutions, breathers, double-periodic solutions, breathers on the periodic background and double-periodic background are studied. Thirdly, the higher-order rogue waves on the periodic and double-periodic background are constructed by semi-degenerate Darboux transformations. In addition, the dynamic behaviors of the solutions are plotted to show some interesting new solution structures.

\noindent{Keywords: Reverse-space-time derivative nonlinear Schr\"{o}dinger equation;  Darboux transformation;  Breathers and rogue waves on the double-periodic background.}

\end{abstract}
\maketitle

\section{Introduction}
 \ \ \ \
 Nonlinear evolution equations play an important role in integrable systems and due to the applications of their solutions \cite{ablowitz-prl-1973,zakharov-springer-1991}, many well-known mathematicians and physicists did some significant work \cite{kaup-jmp-1978,olver-springer-1990,2-m-Abrarov-1992,yajima-jpsj-1995,lederer-pr-2008}.
  Breathers and rogue waves are the important solutions of nonlinear evolution equations. There have been a lot of studies about breathers and rogue waves in recent years \cite{freak1996,rogue2007nature,rogue2009pra,frw2011,lcz2014,breather2014,wanglei-annals-2015,Liming-2018,jxw2020}.

The derivative nonlinear Schr\"{o}dinger equation (DNLS) \cite{2-m-Kundu-2010,JPAXSW,zys2014,xt2018nd} is given by
\begin{equation}\label{dnls}
iq_{t}-q_{x x}+i(q^{2}q^{*})_{x}=0,
\end{equation}
where the complex function $q=q(x,t)$ denotes the wave envelope and $^{*}$ denotes the complex conjugation. Eq. \eqref{dnls} arises in the study of circular polarized Alfv\'{e}n waves in plasma \cite{plasma1}, propagating parallel to the magnetic field \cite{magnetic}, which is one of the most important integrable systems in mathematics and physics. Recently, the equation is also used to describe large-amplitude magnetohydrodynamic waves \cite{plasma2,plasma3} of plasmas, nonlinear optics, the sub-picosecond and femtosecond pulses in single-mode optical fiber \cite{7,8,9}. The DNLS and nonlocal DNLS  equations are reduced from the Kaup-Newell system \cite{kn1994,kn1999-zz} and are Lax integrable. There generate many new physical phenomena and have important physical significance when nonlocal terms are added to nonlinear equations.

In recent years, many researchers have studied  nonlocal DNLS equations from different viewpoints and perspectives. For example, in \cite{CNSCZZX}, the global bounded solutions of the nonlocal DNLS equation have been obtained from zero seed solution by Darboux transformation (DT) \cite{29i,VBDB,CHDB,2-m-GuChaohao-2005,xtzs2017,wmmnd}. Furthermore, solutions and connections of nonlocal DNLS equations have been studied in \cite{NDSY}. In \cite{MPLBMDX}, the periodic bounded solutions of the second-type nonlocal DNLS equation from zero seed solutions have been studied. The $PT$-symmetric, reverse-time, and reverse-space-time nonlocal DNLS equations are integrable infinite dimensional Hamiltonian dynamical systems, which were first proposed by Ablowitz and Musslimani \cite{ab-prl-2013,ab-studies-2017}. The general N-solitons in these three nonlocal nonlinear Schr\"{o}dinger equations are presented by Yang in \cite{yjk-pla-2019}. To investigate the connections between solutions at reverse-space-time points $(x, t)$ and $(-x, -t)$, we need to consider the reverse-space-time reduction. The reverse-space-time DNLS equation is as follows
\begin{equation}\label{ndnls}
 iq_{t}-q_{x x}+i(q^{2}q(-x,-t))_{x}=0,
\end{equation}
here the symmetry reductions are nonlocal both in space and time. The reverse-space-time DNLS equation has many physical applications in optics, ocean water waves, quantum entanglement and an unconventional system of magnetics etc \cite{gad-pra-2016,yjk-pre-2018,pjc-nd-2021}. For eq. \eqref{ndnls}, the evolution of the solution at location $(x, t)$, depends both on the local position $(x, t)$ and the distant position $(-x, -t)$. This implies that the states of the solution at distant opposite locations are directly related, reminiscent of quantum entanglement in pairs of particles \cite{yjk-pla-2019}. The solution of reverse-space-time DNLS equation can extend the solution of the local equation to a more general case and deepen the physical research on the mechanism of ocean rogue waves. These results would also be useful for understanding the corresponding rational soliton phenomena in many fields of nonlocal nonlinear dynamical systems such as nonlinear optics, Bose-Einstein condensates, ocean and other relevant fields \cite{zgq2017,ablowitz-tmp-2018}.

In general, it is extremely nontrivial to construct the rogue waves on a periodic background which is usually associated with complicated  Jacobi elliptic functions \cite{kaa-2014,cjb2018non,rjg2018,cjb-2019,xbgxg2020,lb2020,zhq2021,sms-2021}, PT symmetry \cite{hx2016epjp}, integrable equations with variable coefficients \cite{hjs2013nmp,hjs2014pla}, or vector integrable equations \cite{zlc2017cnsns}. In \cite{lwhjs2018,dccgyt2019}, the rogue waves on the periodic background have been constructed by using odd-order semi-degenerate DT. In this article, we mainly study the breathers and the rogue waves on the double periodic background by using even-fold DT and even-order semi-degenerate DT respectively. This is an effective new method to construct the solutions on double-periodic background without using Jacobi elliptic functions. It is of great physical significance to study rogue waves on a double-periodic background. For example, the rogue waves on the double-periodic background in the hydrodynamical experiments are possible due to the rogue waves on the continuous wave background observed in laser optics and water tanks \cite{sr2018pre}. Rogue waves on the double-periodic background could be relevant to diagnostics of rogue waves on the ocean surface and understanding the formation of random waves due to modulation instability \cite{ac-wavemotion-2017}.

In this work, we construct the breathers and rogue waves on the periodic background by odd-fold DT and odd-order semi-degenerate DT respectively. This is the first time to extend this method to reverse-space-time nonlocal equations. Remarkably, using a plane wave seed solution, the breathers and rogue waves on the double-periodic background are first successfully constructed by even-fold DT and even-order semi-degenerate DT respectively. By taking the dynamics analysis of the first-order rogue waves on double-periodic background, we show two types of structures: the two peaks and four peaks.  The interesting thing can also be seen from the dynamic figures of the second-order rogue waves on the double-periodic background. There are two types of structures: one peak and two peaks. We shall also show the transformation process of double-periodic background and plane wave background in this study. These results have not been reported to our best knowledge.

The organizational structure of this paper is as follows. In section 2, the determinant representation of the $n$-fold DT formula is given. In section 3, using a plane wave seed solution, the periodic solution, breathers on the periodic background are given by odd-fold DT.  The double-periodic solution, breathers and breathers on the double-periodic background solution are given by even-fold DT. In section 4, we construct higher-order rogue waves on the periodic background and double-periodic background by semi-degenerate DT formula. The final section is devoted to conclusion.

\section{DT of the reverse-space-time DNLS equation}

{\bf \subsection{Lax pair of the reverse-space-time DNLS equation}}
\ \ \ \
Starting from the Kaup-Newell system \cite{FEPA,MWJPA}, when the reduction condition is $r(x, t)=-q(-x, -t)$, the Lax pair of the reverse-space-time DNLS equation can be obtained as follows.
\begin{equation}
\begin{array}{c}\label{xlax}
\Psi_{x}=\left(i\sigma \lambda^{2}+Q \lambda\right) \Psi=U \Psi,\\
\end{array}
\end{equation}
\begin{equation}
\begin{array}{c}\label{tlax}
\Psi_{t}=\left(2 i\sigma \lambda^{4}+V_{3} \lambda^{3}+V_{2} \lambda^{2}+V_{1} \lambda\right) \Psi=V \Psi,
\end{array}
\end{equation}
with
$$
\Psi=\left(\begin{array}{c}
\phi \\
\varphi
\end{array}\right), \quad \sigma=\left(\begin{array}{cc}
1 &  0 \\
0 & -1
\end{array}\right), \quad Q=\left(\begin{array}{cc}
0 & \ q \\
-q(-x,-t) & \ 0
\end{array}\right),
$$
$$
V_{3}=2Q, \quad V_{2}=iQ^{2}, \quad V_{1}=Q^{3}+iQ_{x}\sigma=\left(\begin{array}{cc}
0 & -i q_{x}-q^{2}q(-x,-t) \\
iq_{x}(-x,-t)+q(-x,-t)^{2}q & 0
\end{array}\right).
$$
Eq. \eqref{ndnls} can be derived from the integrable condition $U_{t}-V_{x}+[U, V]=0$ of Lax pair \eqref{xlax} and \eqref{tlax}.

Introducing
$\Psi_{j}=\left(\begin{array}{c}
\phi_{j} \\
\varphi_{j}
\end{array}\right)=\left(\begin{array}{c}
\phi_{j}\left(x, t, \lambda_{j}\right) \\
 \varphi_{j}\left(x, t, \lambda_{j}\right)
\end{array}\right),  j=1,2, \ldots, $ which is the eigenfunction of the  Lax pair \eqref{xlax} and \eqref{tlax} associated with $\lambda=\lambda_{j}$. The eigenfunction admit the following symmetry condition:
\begin{equation}\label{sc}
\left(\begin{array}{l}
\phi(x, t; \lambda_{j}) \\
\varphi(x, t; \lambda_{j})
\end{array}\right)
=
\left(\begin{array}{l}
\varphi(-x, -t; \lambda_{j}) \\
\phi(-x, -t; \lambda_{j})
\end{array}\right).
\end{equation}

{\bf  \subsection{ $n$-fold DT of reverse-space-time DNLS equation} }
\ \
DT has unique advantages in constructing solutions due to pure algebraic construction. In this section, the DT for the Eq. \eqref{ndnls} will be introduced. Under gauge transformation
\begin{equation}\label{gt}
\Psi^{[1]}=T \Psi,
\end{equation}  the spectral problem \eqref{xlax} and \eqref{tlax} can be transformed to
\begin{equation}\label{nlax}
\begin{array}{l}
\Psi^{[1]}_{x}=(i\sigma\lambda^{2}+Q^{[1]}\lambda) \Psi^{[1]}=U^{[1]} \Psi^{[1]},  \\
\Psi^{[1]}_{t}=(2 i\sigma \lambda^{4}+V^{[1]}_{3} \lambda^{3}+V^{[1]}_{2} \lambda^{2}+V^{[1]}_{1} \lambda) \Psi^{[1]}=V^{[1]} \Psi^{[1]}.
\end{array}
\end{equation}
Where
$$
Q^{[1]}=\left(\begin{array}{cc}
0 & \ q[1] \\
-q[1](-x,-t) & \ 0
\end{array}\right),
$$
$$
V^{[1]}_{3}=2Q^{[1]}, \quad V_{2}^{[1]}=iQ^{[1]2},
$$
$$
V_{1}^{[1]}=Q^{[1]3}+iQ^{[1]}_{x}\sigma=\left(\begin{array}{cc}
0 & -i q[1]_{x}-q[1]^{2}q[1](-x,-t) \\
iq[1]_{x}(-x,-t)+q[1]^{2}(-x,-t)q[1] & 0
\end{array}\right).
$$
After derivation, we get the following conclusion.
\begin{equation}\label{3.3}
T_{x}=U^{[1]}T-TU,
\end{equation}
\begin{equation}\label{3.4}
T_{t}=V^{[1]}T-TV.
\end{equation}
Furthermore, the following identity can be deduced
\begin{equation}\label{4.8}
U_{t}^{[1]}-V^{[1]}_{x}+[U^{[1]},V^{[1]}]=T(U_{t}-V_{x}+[U,V])T^{-1}.
\end{equation}
Due to the matrix $T$ is nonsingular, the  zero curvature equation $U_{t}-V_{x}+[U,V]=0$ is equivalent to $U_{t}^{[1]}-V^{[1]}_{x}+[U^{[1]},V^{[1]}]=0.$ This implies that, in order to make spectral problem Eq. \eqref{xlax} is invariant under the gauge transformation Eq. \eqref{gt}, it is important to find a matrix $T$ so that $U^{[1]}$ and $V^{[1]}$ have the same forms as $U$ and $V$. At the same time, the old solutions $(q, q(-x,-t))$ in spectral matrixes $U$ and $V$ are mapped into new solutions $(q[1], q[1](-x,-t))$ in transformed spectral matrixes $U^{[1]}$ and $V^{[1]}$.

In general, if the Darboux matrix $T$ is a polynomial of the parameter $\lambda$, for simplicity, we take $T$ as
\begin{equation}\notag
T=\left(\begin{matrix}
a_{1}&b_{1}\\
c_{1}&d_{1}\\
\end{matrix}
\right)\lambda
+\left(\begin{matrix}
a_{0}&b_{0}\\
c_{0}&d_{0}\\
\end{matrix}
\right).
\end{equation}
Substituting the Darboux matrix $T$ into Eq. \eqref{3.3} and  Eq. \eqref{3.4}, the one-fold DT formula can be derived by comparing the coefficient in terms of  $\lambda^{i}$
\begin{equation} \label{db}
q[1]=\frac{a_{1}}{d_{1}}q-2i\frac{b_{0}}{d_{1}}.
\end{equation}
We also can deduced that $b_{1}=c_{1}=0$, $a_{1}$ and $d_{1}$ are undetermined functions about $x$ and $t$. $a_{0}$, $b_{0}$, $c_{0}$ and $d_{0}$ are constants. In order to obtain the specific expression of the elements in the matrix $T$, for simplicity, let $a_{0}=d_{0}=0$ then
\begin{equation}\label{dtT}
T_{1}=\left(\begin{array}{cc}
a_{1} & 0 \\
0 & d_{1}
\end{array}\right) \lambda+\left(\begin{array}{cc}
0 & b_{0} \\
c_{0} & 0
\end{array}\right).
\end{equation}
In particular, taking $b_{0}=c_{0}=\lambda_{1}$, the one-fold DT formula is given by the eigenfunction $\Psi_{1}$ associated with $\lambda_{1}$ as follows
\begin{equation}
\begin{array}{l}\label{21}
q[1]=\left(\frac{\varphi_{1}}{\phi_{1}}\right)^{2} q+2 i \frac{\varphi_{1}}{\phi_{1}} \lambda_{1}.\\
\end{array}
\end{equation}

After $n$ times iterations based on the one-fold Darboux matrix \eqref{dtT}, the form of $n$-fold Darboux matrix is as follows
 \begin{equation}\notag
T_{n}=\sum_{i=0}^{n} P_{i} \lambda^{i},
\end{equation}
$$
P_{i}=\left\{
  \begin{array}{ll}
   \left(\begin{array}{ll}
a_{i}\quad & 0 \\
0\quad & d_{i}
\end{array}\right), & \hbox{$i=n-2\ell$, $0\leq \ell\in Z \leq \frac{n}{2}$;} \\ \\
    \left(\begin{array}{ll}
0 & b_{i} \\
c_{i} & 0
\end{array}\right), & \hbox{$i=n-2\ell-1$, $0\leq \ell\in Z \leq \frac{n}{2}$.}
  \end{array}
\right.
$$
where $P_{0}$ is a constant matrix and $P_{i} (1 \leq i \leq n)$ is a matrix function about $x$ and $t$.
Using the same derivation method as the one-fold DT formula, yields
\begin{equation}\label{qn}
q[n]=\frac{a_{n}}{d_{n}} q-2 i \frac{b_{n-1}}{d_{n}}.
\end{equation}
Furthermore, the determinant representation of $a_{n}$, $d_{n}$ and $b_{n-1}$ can be given by the kernel problem of DT matrix $T_{n}$. i.e.,
\begin{equation}\label{hwt}
\left.T_{n}\right|_{\lambda=\lambda_{k}} \Psi_{k}=\sum_{i=0}^{n} P_{i} \lambda_{k}^{i} \Psi_{k}=0, k=1,2, \cdots, n.
\end{equation}
Then the concrete expression of the new solutions $q[n]$ can be seen in the following.
\begin{theorem}  The new solutions  $q[n]$ given by the following $n$-fold DT formula of the Eq. \eqref{ndnls}.
\begin{equation}\label{dtf}
q[n]=\frac{W_{11}^{2}}{W_{21}^{2}} q+2 i \frac{W_{11} W_{12}}{W_{21}^{2}},
\end{equation}
where $W_{11}$, $W_{12}$, and $W_{21}$ have different forms depending on the parity of $n$.\\
When $n=2k$,
$$
W_{11}=\left|\begin{array}{cccccc}
\lambda_{1}^{n-1} \varphi_{1} & \lambda_{1}^{n-2} \phi_{1} & \lambda_{1}^{n-3} \varphi_{1} & \ldots & \lambda_{1} \varphi_{1} & \phi_{1} \\
\lambda_{2}^{n-1} \varphi_{2} & \lambda_{2}^{n-2} \phi_{2} & \lambda_{2}^{n-3} \varphi_{2} & \ldots & \lambda_{2} \varphi_{2} & \phi_{2} \\
\vdots & \vdots & \vdots & \vdots & \vdots & \vdots \\
\lambda_{n}^{n-1} \varphi_{n} & \lambda_{n}^{n-2} \phi_{n} & \lambda_{n}^{n-3} \varphi_{n} & \ldots & \lambda_{n} \varphi_{n} & \phi_{n}
\end{array}\right|, \\
\ \\
$$
$$
W_{12}=\left|\begin{array}{cccccc}
\lambda_{1}^{n} \phi_{1} & \lambda_{1}^{n-2} \phi_{1} & \lambda_{1}^{n-3} \varphi_{1} & \ldots & \lambda_{1} \varphi_{1} & \phi_{1} \\
\lambda_{2}^{n} \phi_{2} & \lambda_{2}^{n-2} \phi_{2} & \lambda_{2}^{n-3} \varphi_{2} & \ldots & \lambda_{2} \varphi_{2} & \phi_{2} \\
\vdots & \vdots & \vdots & \vdots & \vdots & \vdots \\
\lambda_{n}^{n} \phi_{n} & \lambda_{n}^{n-2} \phi_{n} & \lambda_{n}^{n-3} \varphi_{n} & \ldots & \lambda_{n} \varphi_{n} & \phi_{n}
\end{array}\right|, \\
\ \\
$$
$$
W_{21}=\left|\begin{array}{cccccc}
\lambda_{1}^{n-1} \phi_{1} & \lambda_{1}^{n-2} \varphi_{1} & \lambda_{1}^{n-3} \phi_{1} & \ldots & \lambda_{1} \phi_{1} & \varphi_{1} \\
\lambda_{2}^{n-1} \phi_{2} & \lambda_{2}^{n-2} \varphi_{2} & \lambda_{2}^{n-3} \phi_{2} & \ldots & \lambda_{2} \phi_{2} & \varphi_{2} \\
\vdots & \vdots & \vdots & \vdots & \vdots & \vdots \\
\lambda_{n}^{n-1} \phi_{n} & \lambda_{n}^{n-2} \varphi_{n} & \lambda_{n}^{n-3} \phi_{n} & \ldots & \lambda_{n} \phi_{n} & \varphi_{n}
\end{array}\right|. \\
\ \\
$$
When $n=2k+1,$
$$
W_{11}=\left|\begin{array}{cccccc}
\lambda_{1}^{n-1} \varphi_{1} & \lambda_{1}^{n-2} \phi_{1} & \lambda_{1}^{n-3} \varphi_{1} & \ldots & \lambda_{1} \phi_{1} & \varphi_{1} \\
\lambda_{2}^{n-1} \varphi_{2} & \lambda_{2}^{n-2} \phi_{2} & \lambda_{2}^{n-3} \varphi_{2} & \ldots & \lambda_{2} \phi_{2} & \varphi_{2} \\
\vdots & \vdots & \vdots & \vdots & \vdots & \vdots \\
\lambda_{n}^{n-1} \varphi_{n} & \lambda_{n}^{n-2} \phi_{n} & \lambda_{n}^{n-3} \varphi_{n} & \ldots & \lambda_{n} \phi_{n} & \varphi_{n}
\end{array}\right|, \\
\ \\
$$
$$
W_{12}=\left|\begin{array}{cccccc}
\lambda_{1}^{n} \phi_{1} & \lambda_{1}^{n-2} \phi_{1} & \lambda_{1}^{n-3} \varphi_{1} & \ldots & \lambda_{1} \phi_{1} & \varphi_{1} \\
\lambda_{2}^{n} \phi_{2} & \lambda_{2}^{n-2} \phi_{2} & \lambda_{2}^{n-3} \varphi_{2} & \ldots & \lambda_{2} \phi_{2} & \varphi_{2} \\
\vdots & \vdots & \vdots & \vdots & \vdots & \vdots \\
\lambda_{n}^{n} \phi_{n} & \lambda_{n}^{n-2} \phi_{n} & \lambda_{n}^{n-3} \varphi_{n} & \ldots & \lambda_{n} \phi_{n} & \varphi_{n}
\end{array}\right|,
\ \\
$$
$$
W_{21}=\left|\begin{array}{cccccc}
\lambda_{1}^{n-1} \phi_{1} & \lambda_{1}^{n-2} \varphi_{1} & \lambda_{1}^{n-3} \phi_{1} & \ldots & \lambda_{1} \varphi_{1} & \phi_{1} \\
\lambda_{2}^{n-1} \phi_{2} & \lambda_{2}^{n-2} \varphi_{2} & \lambda_{2}^{n-3} \phi_{2} & \ldots & \lambda_{2} \varphi_{2} & \phi_{2} \\
\vdots & \vdots & \vdots & \vdots & \vdots & \vdots \\
\lambda_{n}^{n-1} \phi_{n} & \lambda_{n}^{n-2} \varphi_{n} & \lambda_{n}^{n-3} \phi_{n} & \ldots & \lambda_{n} \varphi_{n} & \phi_{n}
\end{array}\right|. \\
\ \\
$$
\end{theorem}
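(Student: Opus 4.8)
The plan is to read off the coefficient functions $a_n$, $b_{n-1}$ and $d_n$ appearing in \eqref{qn} directly from the kernel constraints \eqref{hwt} and then repackage them as the determinants $W_{11}$, $W_{12}$, $W_{21}$. First I would insert the checkerboard ansatz $T_n=\sum_{i=0}^n P_i\lambda^i$ into \eqref{hwt}, so that $T_n(\lambda_k)\Psi_k=0$ becomes, for each $k=1,\dots,n$, a pair of scalar equations: the first component couples only the diagonal coefficients in the $(1,1)$ slot to the off-diagonal coefficients in the $(1,2)$ slot, while the second component couples the $(2,1)$ and $(2,2)$ coefficients. Thus the $2n$ constraints split into two decoupled homogeneous linear systems — call them the top-row system in the unknowns $\{a_i\}\cup\{b_i\}$ and the bottom-row system in $\{c_i\}\cup\{d_i\}$ — each consisting of $n$ equations in $n+1$ unknowns.

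Next I would solve each system by Cramer's rule for homogeneous systems, writing every unknown as a signed maximal minor of its coefficient matrix. Ordering the columns by descending power of $\lambda$ and recording which power multiplies $\phi_k$ and which multiplies $\varphi_k$, the minor obtained by deleting the column of $a_n$ is exactly $W_{11}$, the minor deleting the column of $b_{n-1}$ is $W_{12}$, and in the bottom-row system the minor deleting the column of $d_n$ is $W_{21}$. The parity of $n$ controls whether an even or odd power of $\lambda_k$ is attached to $\phi_k$ versus $\varphi_k$, which is precisely why the displayed determinants take two different forms; I would therefore carry out this bookkeeping once for $n=2k$ and once for $n=2k+1$. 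Substituting the Cramer expressions into $q[n]=\frac{a_n}{d_n}q-2i\frac{b_{n-1}}{d_n}$ reduces $q[n]$ to $\frac{\kappa_1}{\kappa_2}\big(\frac{W_{11}}{W_{21}}q+2i\frac{W_{12}}{W_{21}}\big)$, where $\kappa_1,\kappa_2$ are the (as yet undetermined) overall scales of the two homogeneous solution spaces.

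The crux — and the step I expect to be the main obstacle — is pinning down the ratio $\kappa_1/\kappa_2$, since the two systems are a priori scalable independently, yet only the correct ratio upgrades the linear quotient above into the squared quotients $W_{11}^2/W_{21}^2$ and $W_{11}W_{12}/W_{21}^2$ claimed in \eqref{dtf}. Here I would exploit the reduction symmetry \eqref{sc}: swapping $\phi_k\leftrightarrow\varphi_k$ carries the top-row coefficient matrix to the bottom-row one, and by \eqref{sc} this swap is realized by the reflection $(x,t)\mapsto(-x,-t)$, so that $W_{21}(x,t)=W_{11}(-x,-t)$ and the bottom-row solution is the reflected top-row solution. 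Imposing the normalization of the constant matrix $P_0$ inherited from the one-fold transformation — the analogue of $b_0=c_0=\lambda_1$ in \eqref{dtT}–\eqref{21} — then forces $\kappa_1/\kappa_2=W_{11}/W_{21}$, which supplies exactly the missing factor. As a consistency check I would verify the whole chain at $n=1$, where $W_{11}=\varphi_1$, $W_{12}=\lambda_1\phi_1$ and $W_{21}=\phi_1$ recover \eqref{21}, and finally record the even and odd column layouts separately to complete the two cases.
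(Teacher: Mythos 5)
Your proposal follows essentially the same route as the paper: the paper derives \eqref{dtf} exactly by reading $a_{n}$, $b_{n-1}$, $d_{n}$ off the kernel conditions \eqref{hwt}, which split by the checkerboard structure of $T_{n}$ into the two decoupled homogeneous systems you describe, solved by Cramer's rule with the parity bookkeeping giving the two column layouts. Your extra step of fixing $\kappa_{1}/\kappa_{2}$ is sound and fills in what the paper leaves implicit: since each row of the minor obtained by deleting the constant column ($b_{0}$ for odd $n$, $a_{0}$ for even $n$) carries a common factor $\lambda_{k}$, that minor equals $\bigl(\prod_{k}\lambda_{k}\bigr)W_{21}$ (top system) resp. $\bigl(\prod_{k}\lambda_{k}\bigr)W_{11}$ (bottom system), so the inherited normalization $b_{0}=c_{0}$ (resp. $a_{0}=d_{0}$) forces $\kappa_{1}/\kappa_{2}=W_{11}/W_{21}$ and yields the squared quotients, with the symmetry \eqref{sc} then guaranteeing consistency of the nonlocal reduction, exactly as your $n=1$ check against \eqref{21} confirms.
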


\section{Breathers on the periodic and double-periodic background}
\ \ \ \
Starting from the seed solutions $q(x,t)=ce^{i(ax+bt)}$ and $q(-x,-t)=ce^{-i(ax+bt)}$, where $b=-ac^{2}+a^{2}$ and $c$ denoting the background height. In this section, the periodic solution, breathers on the periodic background are given by odd-fold DT. In addition, the double-periodic solution, breathers and breathers on the double-periodic background solution are given by even-fold DT.

Solving the Lax pair \eqref{xlax} and \eqref{tlax}, then it gives
$$
\Psi_{1k}=\left(\begin{array}{c}
            \psi_{11k} \\
            \psi_{12k}
          \end{array}\right)=\left(\begin{array}{l}
\mathrm{e}^{\frac{1}{2} t R(-c^{2}+2\lambda_{k}^{2}+a)+\frac{1}{2} x R+\frac{1}{2}(i(ax+bt))} \\
\frac{ia-2i \lambda^{2}_{k}+R}{2c\lambda_{k}}\mathrm{e}^{\frac{1}{2} t R (-c^{2}+2 \lambda_{k}^{2}+a)+\frac{1}{2} x R-\frac{1}{2}(i(ax+bt))}
\end{array}\right),
$$
$$
\Psi_{2k}=\left(\begin{array}{c}
            \psi_{21k} \\
            \psi_{22k}
          \end{array}\right)=\left(\begin{array}{l}
\mathrm{e}^{-\frac{1}{2} t R(-c^{2}+2 \lambda_{k}^{2}+a)-\frac{1}{2} x R \lambda_{k}+\frac{1}{2}(i(ax+bt))} \\
\frac{ia-2i \lambda_{k}^{2}-R}{2c\lambda_{k}}\mathrm{e}^{-\frac{1}{2} tR(-c^{2}+2 \lambda_{k}^{2}+a)-\frac{1}{2} x R-\frac{1}{2}(i(ax+bt))}
\end{array}\right),
$$
where $$
R=\sqrt{-4c^{2}\lambda_{k}^{2}-4\lambda_{k}^{4}+4a\lambda_{k}^{2}-a^{2}}.
$$

In order to obtain nontrivial solution of Eq. \eqref{ndnls}, we constructed the new eigenfunctions associated with $\lambda_{k}$ by the linear  superposition principle.
\begin{equation}\label{rs}
\begin{split}
\phi_{k}=\psi_{11k}+\psi_{21k}+\psi_{12k}(-x, -t)+\psi_{22k}(-x, -t),\\
\psi_{k}=\psi_{12k}+\psi_{22k}+\psi_{11k}(-x, -t)+\psi_{21k}(-x, -t).
\end{split}
\end{equation}
Using eigenfunction \eqref{rs} and symmetry condition \eqref{sc}, we can get some interesting new solutions.

When $n=1$: For simple, let $\lambda_{1}=i \beta_{1},$ then
\begin{equation}\label{qf1}
|q[1]|^{2}
=\frac{[(iR_{1}-\omega_{1})e^{\omega_{2}+\omega_{5}}-(iR_{1}+\omega_{1})e^{\omega_{5}}]
       [(\omega_{3}+\omega_{4})e^{\omega_{2}}+\omega_{3}-\omega_{4}]}
       {[(iR_{1}+\omega_{1})e^{\omega_{2}}-iR_{1}+\omega_{1}]^{2}},\\
\end{equation}
where
\begin{equation}
\begin{split}
&\omega_{1}=2\beta_{1}^{2}+2\beta_{1}c+a,\\
&\omega_{2}=R_{1}[(-2\beta_{1}^{2}-c^{2}+a)t+x],\\
&\omega_{3}=4\beta_{1}^{3}+2\beta_{1}^{2}c+(2a-2c^{2})\beta_{1}-ac,\\
&\omega_{4}=iR_{1}(2\beta_{1}+c),\\
&\omega_{5}=iac^{2}t-ia^{2}t-iax,\\
&R_{1}=\sqrt{4c^{2}\beta_{1}^{2}-\left(2\beta_{1}^{2}+a\right)^{2}}.
\end{split}
\end{equation}

$\lim\limits_{\substack{x \rightarrow \infty \\ t\rightarrow \infty}}|q[1]|^{2}=c^{2}-2a$, and the trajectory of the solution \eqref{qf1} is $x=(2\beta_{1}^{2}+c^{2}-a)t$. According to these results, we can control the structure of the solution \eqref{qf1} by adjusting the parameters.

\begin{enumerate}
   \item  When $c^{2}>2a$, Eq. \eqref{qf1} can generate soliton solutions. More profound, we find that Eq. \eqref{qf1} can generate a dark soliton when $c^{2}>2a>0$, $\frac{c}{2}+\frac{\sqrt{c^{2}-2a}}{2}>\beta_{1}>\frac{c}{2}-\frac{\sqrt{c^{2}-2a}}{2}$ or $c^{2}>0>2a$, $\frac{c}{2}+\frac{\sqrt{c^{2}-2a}}{2}>\beta_{1}>-\frac{c}{2}+\frac{\sqrt{c^{2}-2a}}{2}$ (see Fig. \ref{ds}).
 Eq. \eqref{qf1} can generate a bright soliton  if $c^{2}>2a>0$, $-\frac{c}{2}+\frac{\sqrt{c^{2}-2a}}{2}>\beta_{1}>-\frac{c}{2}-\frac{\sqrt{c^{2}-2a}}{2}$ or  $c^{2}>0>2a$, $\frac{c}{2}-\frac{\sqrt{c^{2}-2a}}{2}>\beta_{1}>-\frac{c}{2}-\frac{\sqrt{c^{2}-2a}}{2}$ (see Fig. \ref{bs}).
 Eq. \eqref{qf1} can generate periodic solutions when $\beta_{1}$ belongs to other intervals.
 \item $c^{2}\leq 2a$, $\forall \beta_{1}\in R$, Eq. \eqref{qf1} also generate periodic solution (see Fig. \ref{2ps}).
\end{enumerate}

\begin{figure}[ht!]
\centering
\subfigure[Dark soliton]{
\label{ds}
\includegraphics[width=4.8cm]{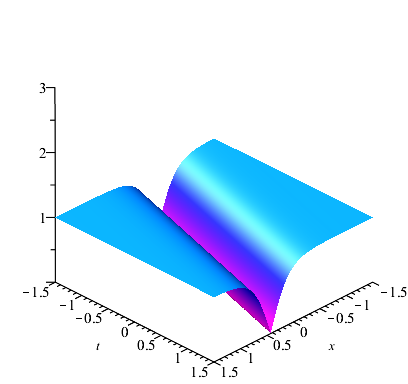}}
\subfigure[Bright soliton]{
\label{bs}
\includegraphics[width=4.2cm]{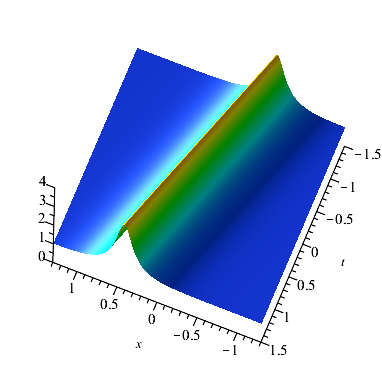}}
\subfigure[Periodic solution]{
\label{2ps}
\includegraphics[width=3.8cm]{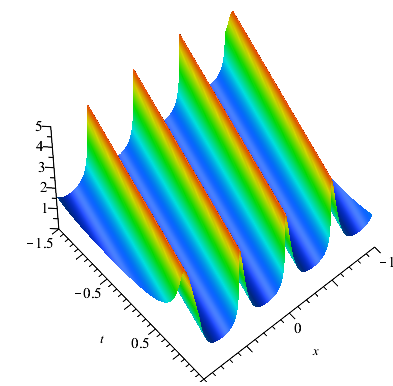}}
\caption{The dynamics of dark soliton, bright soliton and periodic solution generated from  non-zero seed solution: (a) $a=1$, $c=\sqrt{3}$, $\beta_{1}=-\frac{\sqrt{3}}{2}$;  (b) $a = 1$, $c= \sqrt{3}$, $\beta_{1}$ = $\frac{\sqrt{3}}{2}$; (c) $a = 1$, $c = \sqrt{3}$, $\beta_{1}$ = $\frac{3}{2}$.}
\label{fig2}
\end{figure}

\ \ \\

For n=2: The two-fold DT formula \eqref{dtf} of the reverse-space-time DNLS equation implies a solution
\begin{equation}\label{q2b}
q[2]=\frac{-2[i(\lambda_{1}^{2}-\lambda_{2}^{2})\phi_{1}\phi_{2}-\frac{1}{2}\lambda_{2}\phi_{1}\psi_{2}
+\frac{1}{2}\lambda_{1}q\phi_{2}\psi_{1}](-\lambda_{1}\phi_{2}\psi_{1}+\lambda_{2}\phi_{1}\psi_{2})}
{(\lambda_{1}\phi_{1}\psi_{2}-\lambda_{2}\phi_{2}\psi_{1})^{2}},
\end{equation}
where
\begin{equation}
 \left(\begin{array}{c}
            \phi_{2} \\
            \psi_{2}
  \end{array}\right)=
  \left(\begin{array}{c}
            \psi_{1}(-x,-t;\lambda_{2}) \\
            \phi_{1}(-x,-t;\lambda_{2})
  \end{array}\right).
 \end{equation}
We can derive breathers and double-periodic solution according to different reduced methods of spectrum parameter $\lambda_{1}$ and $\lambda_{2}$ as follows.

Case 1: $\lambda_{2}=\pm\lambda_{1}^{*}$,  now $q[2]$ is a breathers. For simplicity, we take $\lambda_{2}=-\lambda_{1}^{*}=-\alpha_{1}+i \beta_{1}$ and $ \operatorname{Im}\left(-a^{2}-4 \lambda_{1}^{4}-4 \lambda_{1}^{2}\left(c^{2}-a\right)\right)=0$. Then $\lim\limits_{\substack{x \rightarrow \infty \\ t\rightarrow \infty}}|q[2]|^{2}=c^{2}$.
      and the center trajectory equation of solution $q[2]$ can be calculated as $x=4(\beta_{1}^{2}-\alpha_{1}^{2})t $. We can know that the solution evolves periodically along the straight line with a certain angle of $x$ axis and $t$ axis when $\beta_{1}^{2}\neq\alpha_{1}^{2}$ from the above analysis (see Fig. \ref{ghxt}). And when $\beta_{1}^{2}=\alpha_{1}^{2}$, the classical Ma breathers (time periodic breather) can be seen in  Fig. \ref{mhx} and the Akhmediev breathers (space periodic breather) can be seen in Fig. \ref{akhx}.
     \begin{figure}[ht!]
\centering
\subfigure[General breathers]{
\label{ghxt}
\includegraphics[width=3.5cm]{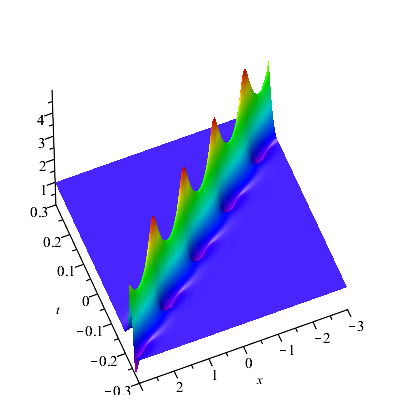}}
\subfigure[Ma breathers]{
\label{mhx}
\includegraphics[width=3.4cm]{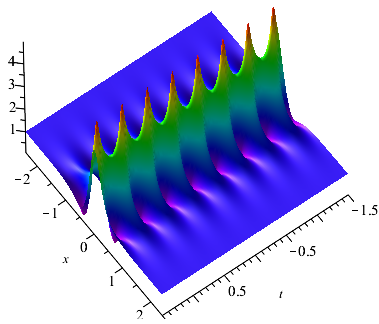}}
\subfigure[Akmediev breathers]{
\label{akhx}
\includegraphics[width=3.3cm]{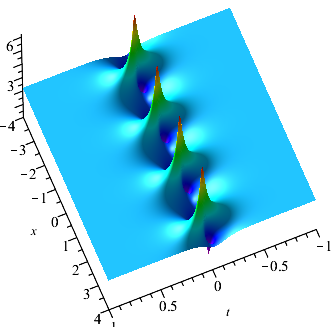}}
\subfigure[Rogue wave]{
\label{gbt}
\includegraphics[width=3.4cm]{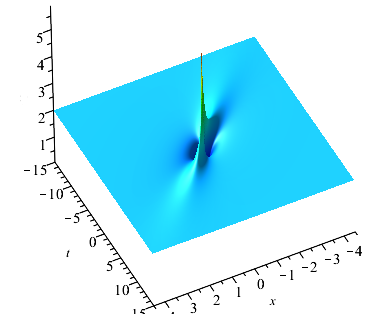}}
\caption{The dynamics of breathers and rogue waves with: (a) $a = 7$, $c = 1$, $\alpha_{1}=2$, $\beta_{1}=1$; (b) $a = 1$, $c = 1$, $\alpha_{1}=1$, $\beta_{1}=1$; (c) $a = 5$, $c = \sqrt{5}$, $\alpha_{1}=1$, $\beta_{1}=1$; (d) $\alpha_{1}=\frac{1}{2}$, $\beta_{1} = 1$; }
\end{figure}

Next, we construct rogue wave $q_{r}$ by letting the period of the breathers tend to be infinity. Let $c\rightarrow-2\beta_{1}$ in $q[2]$, then
\begin{equation}\label{00gbj}
|q_{r}|^{2}=\frac{m_{1}t^{4}+m_{2}t^{3}+m_{3}t^{2}
 +m_{4}t+m_{5}}{m_{6}t^{4}+m_{7}t^{3}+m_{8}t^{2}+m_{9}t+m_{10}},
\end{equation}
\begin{equation}\notag
\begin{split}
&m_{1}=262144(\alpha_{1}^{12}\beta_{1}^{6}+2\alpha_{1}^{6}\beta_{1}^{12}+\beta_{1}^{18}),\\
&m_{2}=262144(\alpha_{1}^{10}\beta_{1}^{6}-\alpha_{1}^{6}\beta_{1}^{10} +\alpha_{1}^{4}\beta_{1}^{12}-\beta_{1}^{16})x,\\
&m_{3}=
(32768(3\alpha_{1}^{8}\beta_{1}^{6}+\alpha_{1}^{6}\beta_{1}^{8}-4\alpha_{1}^{4}\beta_{1}^{10}+\alpha_{1}^{2}\beta_{1}^{12}+3\beta_{1}^{14})x^{2}
-6144(\alpha_{1}^{6}\beta_{1}^{4}-6\alpha_{1}^{4}\beta_{1}^{6}+\beta_{1}^{10})),\\
&m_{4}=(16384(\alpha_{1}^{6}\beta_{1}^{6}+\alpha_{1}^{4}\beta_{1}^{8}-\alpha_{1}^{2}\beta_{1}^{10}-\beta_{1}^{12})x^{3}
 +3072(2\alpha_{1}^{2}\beta_{1}^{6}+\beta_{1}^{8}-\alpha_{1}^{4}\beta_{1}^{4})x),\\
&m_{5}=1024(\alpha_{1}^{4}\beta_{1}^{6}+2\alpha_{1}^{2}\beta_{1}^{8}+\beta_{1}^{10})x^{4}-128(3\alpha_{1}^{2}\beta_{1}^{4}
 +\beta_{1}^{6})x^{2}+36\beta_{1}^{2},\\
&m_{6}=65536(\alpha_{1}^{12}\beta_{1}^{4}+2\alpha_{1}^{6}\beta_{1}^{10}+\beta_{1}^{16}),\\
&m_{7}=65536(\alpha_{1}^{10}\beta_{1}^{4}-\alpha_{1}^{6}\beta_{1}^{8}+\alpha_{1}^{4}\beta_{1}^{10}-\beta_{1}^{14})x,\\
&m_{8}=8192(3\alpha_{1}^{8}\beta_{1}^{4}+\alpha_{1}^{6}\beta_{1}^{6}
 -4\alpha_{1}^{4}\beta_{1}^{8}+\alpha_{1}^{2}\beta_{1}^{10}+3\beta_{1}^{12})x^{2}
 +512(\alpha_{1}^{6}\beta_{1}^{2}+2\alpha_{1}^{4}\beta_{1}^{4}-8\alpha_{1}^{2}\beta_{1}^{6}+9\beta_{1}^{8}),\\
&m_{9}=4096(\alpha_{1}^{6}\beta_{1}^{4}+\alpha_{1}^{4}\beta_{1}^{6}-\alpha_{1}^{2}\beta_{1}^{8}-\beta_{1}^{10})x^{3}
+256(\alpha_{1}^{4}\beta_{1}^{2}+2\alpha_{1}^{2}\beta_{1}^{4}-5\beta_{1}^{6})x,\\
&m_{10}=256(\alpha_{1}^{4}\beta_{1}^{4}+2\alpha_{1}^{2}\beta_{1}^{6}+\beta_{1}^{8})x^{4}+32(\alpha_{1}^{2}\beta_{1}^{2}
 +3\beta_{1}^{4})x^{2}+1.\\
 \end{split}
\end{equation}

After calculation and analysis, we know that $\lim\limits_{\substack{x \rightarrow \infty \\ t\rightarrow \infty}}|q_{r}|^{2}=(2\beta_{1})^{2}$. The maximum amplitude of $|q_{r}|^{2}$ equals to $(6\beta_{1})^{2}$ occurs at $x=0$ and $t=0$. This means that the maximum amplitude of the rogue wave $q_{r}$ is $3$ times compared with the asymptotic plane wave at infinity. The min amplitude of $|q_{r}|^{2}$ occurs at ($\sqrt{\frac{27\alpha_{1}^{4}}{16\beta_{1}^{2}(4\alpha_{1}^{2}+\beta_{1}^{2})(\alpha_{1}^{2}+\beta_{1}^{2})^{2}}}$,
$\sqrt{\frac{3}{256\beta_{1}^{2}(4\alpha_{1}^{2}+\beta_{1}^{2})(\alpha_{1}^{2}+\beta_{1}^{2})^{2}}}$) and ($-\sqrt{\frac{27\alpha_{1}^{4}}{16\beta_{1}^{2}(4\alpha_{1}^{2}+\beta_{1}^{2})(\alpha_{1}^{2}+\beta_{1}^{2})^{2}}}$,
$-\sqrt{\frac{3}{256\beta_{1}^{2}(4\alpha_{1}^{2}+\beta_{1}^{2})(\alpha_{1}^{2}+\beta_{1}^{2})^{2}}}$), which  equals to $\frac{108(3\alpha_{1}^{4}-2\alpha_{1}^{2}\beta_{1}^{2}+4\beta_{1}^{4})\alpha_{1}^{4}\beta_{1}^{2}}
{169\alpha_{1}^{8}-56\alpha_{1}^{6}\beta_{1}^{2}+6\alpha_{1}^{4}\beta_{1}^{4}-8\alpha_{1}^{2}\beta_{1}^{6}+4\beta_{1}^{8}}$. The rogue wave $|q_{r}|$ with specific parameter $\alpha_{1}=\frac{1}{2}$ and $\beta_{1} = 1$ is plotted in Fig. \eqref{gbt}.
From the graph of the rogue wave $|q_{r}|$ , we can see that the rogue wave  $q_{r}$ has a single peak with two caves on both sides of the peak. The optical pulse $q_{r}$ only exists locally with all variables and disappears as time and space go far.

Case 2: $\lambda_{1}=i\beta_{1}$, $\lambda_{2}=i\beta_{2}$ and $\beta_{2}\neq\pm\beta_{1}$, $q[2]$ is represented as a double-periodic wave solution which is similar to the Jacobi elliptic function-type seed solution. From the Fig. \ref{shxt}, we can  see clearly that there are two periodic waves with different directions in the double-periodic wave solution, and when the two waves with different directions are superimposed on each other, a higher wave peak can be generated. From a visual perspective, it seems that several parallel breathers are generated under the period background.
\begin{figure}[ht!]
\centering
{
\includegraphics[width=4.8cm]{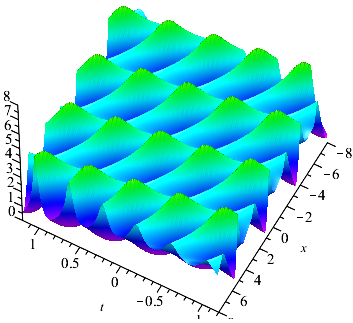}
\includegraphics[width=4.3cm]{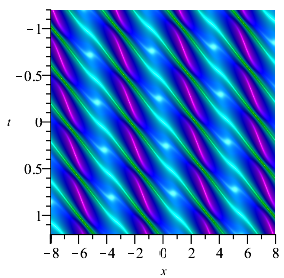}
\includegraphics[width=3.8cm]{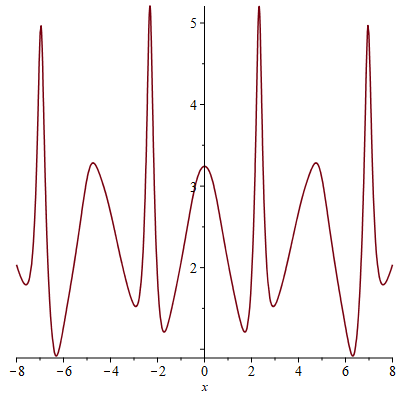}}
\caption{The dynamics of double-periodic solution with: $a = 1$, $c = 1$, $\beta_{1} = \sqrt{2}$, $\beta_{2} = \frac{\sqrt{2}}{2}$.}
\label{shxt}
\end{figure}

We find that the periodic solution can be generated by first-fold DT. The breathers and double-periodic solutions can be constructed respectively according to second-fold DT. Therefore, we consider constructing the breathers on the periodic background by odd-fold DT, and constructing the breathers on the double-periodic background by even-fold DT.

For n=3: Set $\lambda_{2}=-\lambda_{1}^{*}=-\alpha_{1}+i \beta_{1}$, $\lambda_{3}= i\beta_{3}$ and $a=2\alpha_{1}^{2}-2\beta_{1}^{2}+c^{2}$. Parameter values have a great influence on the propagation direction of the breathers. When $\beta_{1}^{2}\neq\alpha_{1}^{2}$,  solution $q[3]$ is a general breather solution on periodic background (see Fig. \ref{xthxz}). When $\beta_{1}^{2}=\alpha_{1}^{2}$, the classical Ma breathers on the periodic background can be seen in  Fig. \ref{xhxz} and the Akhmediev breathers on the periodic background can be seen in Fig. \ref{thxz}. There are some interesting phenomenons: Under the perturbation of the periodic background, the crest of the Ma breathers is cut and the phase shift occurs at the center of the breathers.
\begin{figure}[ht!]
\centering
\subfigure[General breathers on the periodic background]{\label{xthxz}
\begin{minipage}[b]{0.3\textwidth}
\includegraphics[width=4.2cm]{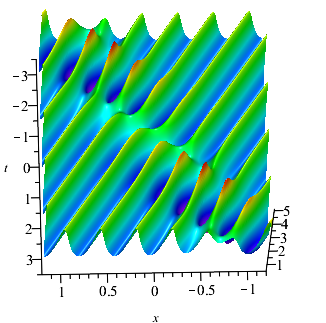} \\
\end{minipage}
}
\subfigure[Ma breathers on the periodic background]{\label{xhxz}
\begin{minipage}[b]{0.3\textwidth}
\includegraphics[width=4.2cm]{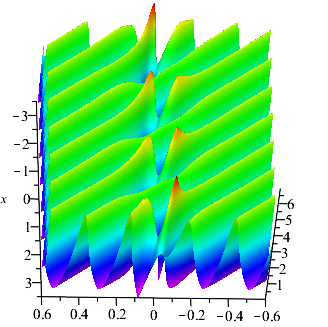} \\
\end{minipage}
}
\subfigure[Akhmediev breathers on the periodic background]{\label{thxz}
\begin{minipage}[b]{0.3\textwidth}
\includegraphics[width=4.4cm]{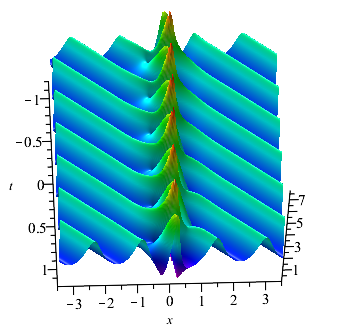} \\
\end{minipage}
}
\caption{The dynamics of breathers on the periodic background: (a) $a=1$, $c=1$, $\alpha_{1}=1$, $\beta_{1}=\frac{1}{3}$, $\beta_{3}=\sqrt{2}$; (b) $a=1$, $c=1$, $\alpha_{1}=1$, $\beta_{1}=1$, $\beta_{3}=\sqrt{2}$; (c) $a=5$, $c=\sqrt{5}$, $\alpha_{1}=1$, $\beta_{1}=1$, $\beta_{3}=\sqrt{2}$;}
\label{q3t}
\end{figure}
\ \ \ \\

For n=4:  Set $\lambda_{2}=-\lambda_{1}^{*}=-\alpha_{1}+i \beta_{1}$, $\lambda_{3}= i\beta_{3}$, $\lambda_{4}= i\beta_{4}$ and $\beta_{4}\neq\pm\beta_{3}$. The breathers on a double-periodic background generated by formula \eqref{dtf}.  Similar to $n=2$, we also let $a=2 \alpha_{1}^{2}-2 \beta_{1}^{2}+c^{2}$.  When $\beta_{1}^{2}\neq\alpha_{1}^{2}$, we can construct the general breathers on the double-periodic (see Fig. \ref{q4xt}). Under the disturbance of double-periodic background, the propagation direction of general breathers usually produces shift.  When $\beta_{1}^{2}=\alpha_{1}^{2}$, the Ma breathers and Akhmediev breathers can be constructed by adjusting spectrum parameters. As for the Ma breathers on the double-periodic, due to the great influence of the double-periodic background, the image of Ma breathers solution looks like it disappears in the double-periodic background (see Fig. \ref{q4t6}).
The Akhmediev breathers on the double-periodic background is plotted in Fig. \ref{q4}. Visually, it looks like a breathers with a higher amplitude is generated under the several parallel breathers background.

\begin{figure}[ht!]
\centering
\subfigure[General breathers on the double-periodic background]{
\label{q4xt}
\begin{minipage}[b]{0.3\textwidth}
\includegraphics[width=3.8cm]{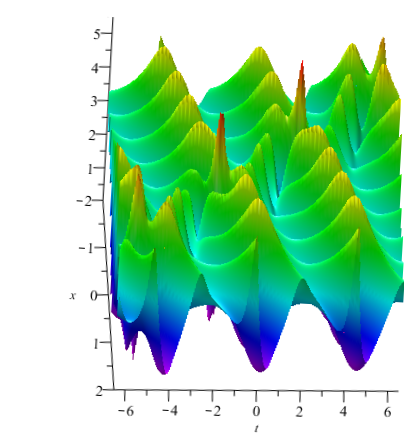} \\
\includegraphics[width=4.0cm]{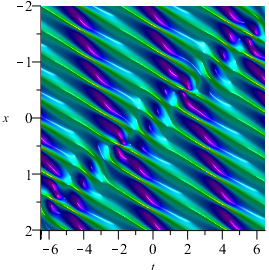}
\end{minipage}
}
\subfigure[Ma breathers on the double-periodic background]{
\label{q4t6}
\begin{minipage}[b]{0.3\textwidth}
\includegraphics[width=4.2cm]{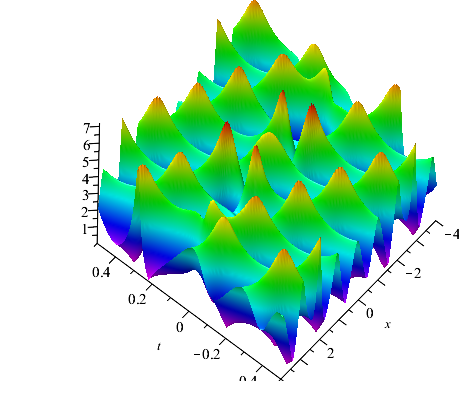} \\
\includegraphics[width=4.0cm]{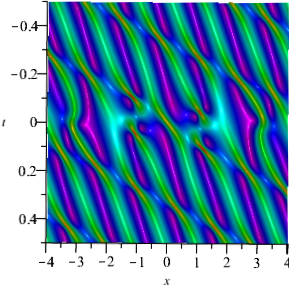}
\end{minipage}
}
\subfigure[Akhmediev Breathers on the double-periodic background]{
\label{q4}
\begin{minipage}[b]{0.3\textwidth}
\includegraphics[width=4.5cm]{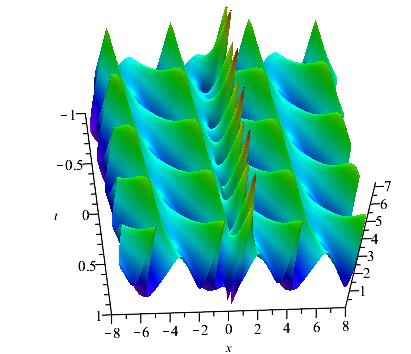}\\
\includegraphics[width=4.0cm]{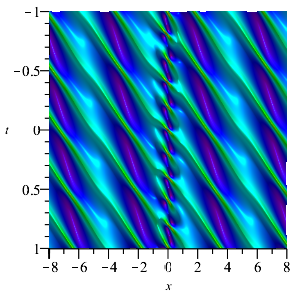}
\end{minipage}
}
\caption{The dynamics of breathers on the double-periodic background with: (a) $a=\frac{7}{9}$, $c=1$, $\alpha_{1}=1$, $\beta_{1}=\frac{1}{3}$, $\beta_{3}=\sqrt{2}$, $\beta_{4}=\frac{\sqrt{2}}{2}$; (b) $a=5$, $c=\sqrt{5}$, $\alpha_{1}=1$, $\beta_{1}=1$, $\beta_{3}=\sqrt{2}$, $\beta_{4}=\frac{\sqrt{2}}{2}$; (c) $a=1$, $c=1$, $\alpha_{1}=1$, $\beta_{1}=1$, $\beta_{3}=\sqrt{2}$, $\beta_{4}=\frac{\sqrt{2}}{2}$. }
\label{1-qb-periodic}
\end{figure}
In addition, if we  set $\lambda_{2}=-\lambda_{1}^{*}=-\alpha_{1}+i \beta_{1}$ and $\lambda_{4}=-\lambda_{3}^{*}=-\alpha_{3}+i \beta_{3}$.
When $\beta_{3}^{2}-\alpha_{3}^{2}$=$\beta_{1}^{2}-\alpha_{1}^{2}$, we can  construct the velocity resonance of two pairs of breathers (see Fig. \ref{n4hg} ). Otherwise, we can construct the elastic collision  of two pairs of breathers (see Fig. \ref{n4tp} ).
\begin{figure}[ht!]
\centering
\subfigure[Velocity resonance of breathers]{\label{n4hg}
\begin{minipage}[b]{0.35\textwidth}
\includegraphics[width=4.5cm]{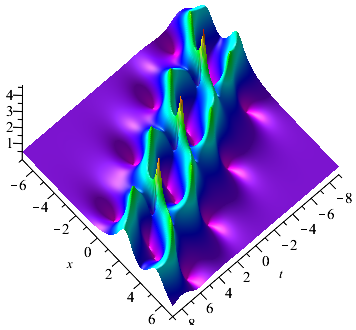}
\end{minipage}}
\subfigure[Elastic collision of breathers]{\label{n4tp}
\begin{minipage}[b]{0.35\textwidth}
\includegraphics[width=4.5cm]{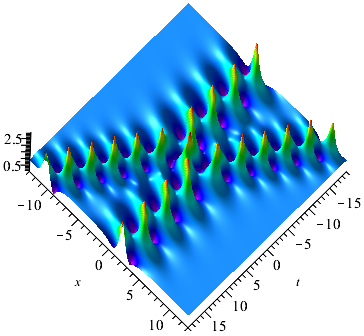}
\end{minipage}}
\caption{The dynamics of velocity resonance and elastic collision of breathers with: (a) $a=\frac{7}{25}$, $c=\frac{1}{2}$, $\alpha_{1}=\frac{1}{2}$, $\beta_{1}=\frac{3}{5}$, $\alpha_{3}=\frac{3}{5}$, $\beta_{3}=\frac{\sqrt{47}}{10}$; (b) $a=\frac{361}{800}$, $c=\frac{19}{20}$, $\alpha_{1}=-\frac{1}{2}$, $\beta_{1}=\frac{1}{2}$, $\alpha_{3}=\frac{3}{5}$, $\beta_{3}=\frac{1}{2}$.}
\label{q4gb}
\end{figure}
\section{Higher-order Rogue waves on the double-periodic background}

\ \ \ \
Note that the eigenfunction is degenerated when $\lambda=\pm\frac{1}{2}\sqrt{2a-c^{2}}-\frac{1}{2}ic$. In this case, the higher-order rogue waves can be derived. Combined with the methods of constructing the periodic and double-periodic background in the previous section, we will give the solutions of higher-order rogue waves on the periodic and double-periodic background in this section.
 Since periodic can be derived by odd-fold DT, both breathers and double-periodic solution can be obtained by even-fold DT. We can construct higher-order rogue waves on the periodic by odd-order semi-degenerate DT and higher-order rogue waves on the double-periodic by even-order semi-degenerate DT.
 {\bf \subsection{Semi-degenerate DT formula}}

\begin{theorem} Let $\lambda_{2}=-\lambda_{1}^{*}=-\frac{1}{2}\sqrt{2a-c^{2}}-\frac{1}{2}ic$.  When $n=2k$, set $\lambda_{n-1}= i\beta_{n-1}$,  $\lambda_{n}= i\beta_{n}$ and $\beta_{n-1} \neq \pm \beta_{n}$. When $n=2k+1$, set $\lambda_{n}= i\beta_{n}$, then the semi-degenerate DT formula can be obtained as follows.
\begin{equation}\label{gdtf}
q_{n}=\frac{W_{11}^{'2}}{W_{21}^{'2}} q+2 i \frac{W_{11}^{'} W_{12}^{'}}{W_{21}^{'2}}.
\end{equation}

When $n=2k$,
$$
\begin{array}{l}
W_{11}^{'}=\left|\begin{array}{cccccc}
\varphi_{[1,n-1,1]} & \phi_{[1,n-2,1]} & \varphi_{[1,n-3,1]} & \ldots &\varphi_{[1,1,1]} & \phi_{[1,0,1]} \\
\varphi_{[2,n-1,1]} &\phi_{[2,n-2,1]} & \varphi_{[2,n-3,1]} & \ldots & \varphi_{[2,1,1]} & \phi_{[2,0,1]} \\
\vdots & \vdots & \vdots & \vdots & \vdots & \vdots \\
\varphi_{[1,n-1,k-1]} & \phi_{[1,n-2,k-1]} & \varphi_{[1,n-3,k-1]} & \ldots &\varphi_{[1,1,k-1]} & \phi_{[1,0,k-1]} \\
\varphi_{[2,n-1,k-1]} &\phi_{[2,n-2,k-1]} & \varphi_{[2,n-3,k-1]} & \ldots & \varphi_{[2,1,k-1]} & \phi_{[2,0,k-1]} \\
\lambda_{n-1}^{n-1}\varphi_{n-1} & \lambda_{n-1}^{n-2}\phi_{n-1} &\lambda_{n-1}^{n-3}\varphi_{n-1} & \ldots & \lambda_{n-1}\varphi_{n-1} & \phi_{n-1}\\
\lambda_{n}^{n-1}\varphi_{n} & \lambda_{n}^{n-2}\phi_{n} &\lambda_{n}^{n-3}\varphi_{n} & \ldots & \lambda_{n}\varphi_{n} & \phi_{n}
\end{array}\right|, \\
\ \\
\end{array}
$$
$$
\begin{array}{l}
W_{12}^{'}=\left|\begin{array}{cccccc}
 \phi_{[1,n,1]} & \phi_{[1,n-2,1]} & \varphi_{[1,n-3,1]} & \ldots & \varphi_{[1,1,1]} & \phi_{[1,0,1]} \\
\phi_{[2,n,1]} &  \phi_{[2,n-2,1]} & \varphi_{[2,n-3,1]} & \ldots & \varphi_{[2,1,1]]} & \phi_{[2,0,1]} \\
\vdots & \vdots & \vdots & \vdots & \vdots & \vdots \\
 \phi_{[1,n,k-1]} & \phi_{[1,n-2,k-1]} & \varphi_{[1,n-3,k-1]} & \ldots & \varphi_{[1,1,k-1]} & \phi_{[1,0,k-1]} \\
\phi_{[2,n,k-1]} &  \phi_{[2,n-2,k-1]} & \varphi_{[2,n-3,k-1]} & \ldots & \varphi_{[2,1,k-1]]} & \phi_{[2,0,k-1]} \\
\lambda_{n-1}^{n}\phi_{n-1} & \lambda_{n-1}^{n-2}\phi_{n-1} &\lambda_{n-1}^{n-3}\varphi_{n-1} & \ldots & \lambda_{n-1}\varphi_{n-1} & \phi_{n-1}\\
\lambda_{n}^{n}\phi_{n} & \lambda_{n}^{n-2}\phi_{n} &\lambda_{n}^{n-3}\varphi_{n} & \ldots & \lambda_{n}\varphi_{n} & \phi_{n}
\end{array}\right|, \\
\ \\
\end{array}
$$
$$
\begin{array}{l}
W_{21}^{'}=\left|\begin{array}{cccccc}
\phi_{[1,n-1,1]} & \varphi_{[1,n-2,1]} & \phi_{[1,n-3,1]} & \ldots & \phi_{[1,1,1]} & \varphi_{[1,0,1]} \\
\phi_{[2,n-1,1]} & \varphi_{[2,n-2,1]} & \phi_{[2,n-3,1]} & \ldots &\phi_{[2,1,1]} & \varphi_{[2,0,1]} \\
\vdots & \vdots & \vdots & \vdots & \vdots & \vdots \\
\phi_{[1,n-1,k-1]} & \varphi_{[1,n-2,k-1]} & \phi_{[1,n-3,k-1]} & \ldots & \phi_{[1,1,k-1]} & \varphi_{[1,0,k-1]} \\
\phi_{[2,n-1,k-1]} & \varphi_{[2,n-2,k-1]} & \phi_{[2,n-3,k-1]} & \ldots &\phi_{[2,1,k-1]} & \varphi_{[2,0,k-1]} \\
\lambda_{n-1}^{n-1}\phi_{n} & \lambda_{n-1}^{n-2}\varphi_{n} &\lambda_{n-1}^{n-3}\phi_{n} & \ldots & \lambda_{n-1}\phi_{n} & \varphi_{n-1}\\
\lambda_{n}^{n-1}\phi_{n} & \lambda_{n}^{n-2}\varphi_{n} &\lambda_{n}^{n-3}\phi_{n} & \ldots & \lambda_{n}\phi_{n} & \varphi_{n}
\end{array}\right|, \\
\ \\
\end{array}
$$

When $n=2k+1,$
$$
\begin{array}{l}
W_{11}^{'}=\left|\begin{array}{cccccc}
\varphi_{[1,n-1,1]} & \phi_{[1,n-2,1]} & \varphi_{[1,n-3,1]} & \ldots &\phi_{[1,1,1]} & \varphi_{[1,0,1]} \\
\varphi_{[2,n-1,1]} &\phi_{[2,n-2,1]} & \varphi_{[2,n-3,1]} & \ldots & \phi_{[2,1,1]} & \varphi_{[2,0,1]} \\
\vdots & \vdots & \vdots & \vdots & \vdots & \vdots \\
\varphi_{[1,n-1,k]} & \phi_{[1,n-2,k]} & \varphi_{[1,n-3,k]} & \ldots &\phi_{[1,1,k]} & \varphi_{[1,0,k]} \\
\varphi_{[2,n-1,k]} &\phi_{[2,n-2,k]} & \varphi_{[2,n-3,k]} & \ldots & \phi_{[2,1,k]} & \varphi_{[2,0,k]} \\
\lambda_{n}^{n-1}\varphi_{n} & \lambda_{n}^{n-2}\phi_{n} & \lambda_{n}^{n-3}\varphi_{n} & \ldots & \lambda_{n}\phi_{n} & \varphi_{n}
\end{array}\right|, \\
\ \\
\end{array}
$$

$$
\begin{array}{l}
W_{12}^{'}=\left|\begin{array}{cccccc}
 \phi_{[1,n,1]} & \phi_{[1,n-2,1]} & \varphi_{[1,n-3,1]} & \ldots & \phi_{[1,1,1]} & \varphi_{[1,0,1]} \\
\phi_{[2,n,1]} &  \phi_{[2,n-2,1]} & \varphi_{[2,n-3,1]} & \ldots & \phi_{[2,1,1]]} & \varphi_{[2,0,1]} \\
\vdots & \vdots & \vdots & \vdots & \vdots & \vdots \\
 \phi_{[1,n,k]} & \phi_{[1,n-2,k]} & \varphi_{[1,n-3,k]} & \ldots & \phi_{[1,1,k]} & \varphi_{[1,0,k]} \\
\phi_{[2,n,k]} &  \phi_{[2,n-2,k]} & \varphi_{[2,n-3,k]} & \ldots & \phi_{[2,1,k]]} & \varphi_{[2,0,k]} \\
\lambda_{n}^{n}\phi_{n} & \lambda_{n}^{n-2}\phi_{n}&  \lambda_{n}^{n-3}\varphi_{n} & \ldots & \lambda_{n}\phi_{n} & \varphi_{n}
\end{array}\right|, \\
\ \\
\end{array}
$$

$$
\begin{array}{l}
W_{21}^{'}=\left|\begin{array}{cccccc}
\phi_{[1,n-1,1]} & \varphi_{[1,n-2,1]} & \phi_{[1,n-3,1]} & \ldots & \varphi_{[1,1,1]} & \phi_{[1,0,1]} \\
\phi_{[2,n-1,1]} & \varphi_{[2,n-2,1]} & \phi_{[2,n-3,1]} & \ldots &\varphi_{[2,1,1]} & \phi_{[2,0,1]} \\
\vdots & \vdots & \vdots & \vdots & \vdots & \vdots \\
\phi_{[1,n-1,k]} & \varphi_{[1,n-2,k]} & \phi_{[1,n-3,k]} & \ldots & \varphi_{[1,1,k]} & \phi_{[1,0,k]} \\
\phi_{[2,n-1,k]} & \varphi_{[2,n-2,k]} & \phi_{[2,n-3,k]} & \ldots &\varphi_{[2,1,k]} & \phi_{[2,0,k]} \\
\lambda_{n}^{n-1}\phi_{n} & \lambda_{n}^{n-2}\varphi_{n}&  \lambda_{n}^{n-3}\phi_{n} & \ldots & \lambda_{n}\varphi_{n} & \phi_{n}
\end{array}\right|. \\
\ \\
\end{array}
$$
\end{theorem}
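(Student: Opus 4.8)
The plan is to obtain formula \eqref{gdtf} as a confluent (coalescence) limit of the non-degenerate $n$-fold formula \eqref{dtf}. The key observation is that at the two branch points $\lambda_0^{\pm}=\pm\frac12\sqrt{2a-c^2}-\frac12 ic$ one has $R=0$, so the two basis solutions $\Psi_{1k},\Psi_{2k}$ coincide and the superposition \eqref{rs} ceases to furnish an independent eigenfunction; this is precisely the degeneracy flagged at the start of this section. To build an $n$-fold transformation anchored at these points I would keep the last two parameters $\lambda_{n-1}=i\beta_{n-1},\lambda_n=i\beta_n$ (when $n=2k$) or the single parameter $\lambda_n=i\beta_n$ (when $n=2k+1$) generic, carrying the double-periodic resp. periodic background, and let the remaining $2(k-1)$ resp. $2k$ parameters cluster, in pairs related by the reduction $\lambda\mapsto-\lambda^{*}$, around the two branch points $\lambda_0^{\pm}$.

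First I would fix the perturbation scheme, choosing the clustered spectral parameters as $\lambda_0^{\pm}+\epsilon_s$ with $\epsilon_s$ proportional to a single small parameter $\epsilon$, so that in the limit $\epsilon\to0$ the clustered rows of the Wronskian-type determinants $W_{11},W_{12},W_{21}$ of \eqref{dtf} become linearly dependent. Second, I would Taylor-expand each weighted eigenfunction about the relevant branch point, writing $\lambda^{j}\phi=\sum_{\ell\ge0}\phi_{[i,j,\ell]}\,\epsilon^{\ell}$ and $\lambda^{j}\varphi=\sum_{\ell\ge0}\varphi_{[i,j,\ell]}\,\epsilon^{\ell}$, which is exactly the definition of the entries appearing in $W_{11}^{'},W_{12}^{'},W_{21}^{'}$; here $i\in\{1,2\}$ labels the branch point $\lambda_0^{\pm}$ and $\ell$ records the Taylor order. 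Third, I would apply successive elementary row operations inside each determinant, subtracting lower-order rows and dividing by $\epsilon$ and iterating, so that the clustered rows are converted into rows of Taylor coefficients of orders $\ell=1,\dots,k-1$ resp. $\ell=1,\dots,k$ for each branch $i=1,2$. Letting $\epsilon\to0$ then replaces each determinant by its confluent version, with the last one or two rows being the untouched generic-parameter rows carrying $\lambda_{n-1},\lambda_n$.

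The remaining point is the bookkeeping of the $\epsilon$-powers. Each order-$\ell$ row operation extracts a factor $\epsilon^{\ell}$, so the three determinants $W_{11},W_{12},W_{21}$ all acquire the same total power of $\epsilon$; since \eqref{dtf} is the homogeneous ratio $q[n]=W_{11}^2 q/W_{21}^2+2i\,W_{11}W_{12}/W_{21}^2$, this common factor cancels identically and the limit $q_n=\lim_{\epsilon\to0}q[n]$ is finite and nontrivial. I would then verify that the alternating $\phi/\varphi$ pattern and the descending powers $\lambda^{n-1},\lambda^{n-2},\dots$ inherited from \eqref{dtf} are preserved under the row reduction, treating $n=2k$ and $n=2k+1$ separately. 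The main obstacle is exactly this last verification: confirming that the confluent row reduction does not disturb the delicate alternating column structure forced by the symmetry reduction \eqref{sc}, and that the surviving $\epsilon$-independent rows assemble into precisely the claimed $W_{11}^{'},W_{12}^{'},W_{21}^{'}$ in both parity cases. The power-counting cancellation and the identification of the limiting entries with the Taylor coefficients $\phi_{[i,j,\ell]},\varphi_{[i,j,\ell]}$ are then routine once the scheme is fixed.
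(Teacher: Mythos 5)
Your proposal is correct and coincides with the paper's own proof: the paper likewise defines the Taylor coefficients $\Psi[i,j,k]=\frac{1}{k!}\frac{\partial^{k}}{\partial\varepsilon^{k}}\bigl[(\lambda_{i}+\varepsilon)^{j}\Psi(\lambda_{i}+\varepsilon)\bigr]$, clusters the remaining spectral parameters in $-\lambda^{*}$-paired fashion at the two degenerate points $\pm\frac{1}{2}\sqrt{2a-c^{2}}-\frac{1}{2}ic$ while leaving the rows carrying $\lambda_{n-1},\lambda_{n}$ (resp.\ $\lambda_{n}$) untouched, and converts the clustered rows into successive Taylor-coefficient rows by determinant manipulation — exactly your confluent-limit scheme, treated separately for $n=2k$ and $n=2k+1$. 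The only difference is presentational: the paper leaves the $\varepsilon$-power cancellation and the preservation of the alternating $\phi/\varphi$ column structure implicit, whereas you make that bookkeeping explicit.
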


\begin{proof}
Define the new function $\Psi[i, j, k]$ as follows
\begin{equation}
\label{te}
\lambda^{j} \Psi=\Psi[i, j, 0]+\Psi[i, j, 1] \varepsilon+\Psi[i, j, 2] \varepsilon^{2}+\cdots+\Psi[i, j, k] \varepsilon^{k}+\cdots,
\end{equation}
where $\varepsilon$ is a small parameter, $\Psi[i, j, k]=\frac{1}{k!} \frac{\partial^{k}}{\partial \varepsilon^{k}}\left[\left(\lambda_{i}+\varepsilon\right)^{j} \Psi\left(\lambda_{i}+\varepsilon\right)\right]$.

When $n=2k$, set $\lambda_{1}=\frac{1}{2}\sqrt{2a-c^{2}}-\frac{1}{2}ic+\varepsilon_{1}$, $\lambda_{2}=-\frac{1}{2}\sqrt{2a-c^{2}}-\frac{1}{2}ic+\varepsilon_{1}$, $\lambda_{2k-3}\rightarrow \lambda_{1}$ and $\lambda_{2k-2}\rightarrow \lambda_{2}$, $2\leq k\leq\frac{n}{2}$. Then using the expansion equation \eqref{te} to the elements in the first column of $W_{11}$ (remain the elements of the $(2k-1)$-th and $2k$-th row unchanged), we have
\begin{equation}\notag
\begin{aligned}
&\lambda_{1}^{n-1} \varphi_{1}=\varphi[1, n-1,1],\\
&\lambda_{2}^{n-1} \varphi_{2}=\varphi[2, n-1,1] \\
&\lambda_{3}^{n-1} \varphi_{3}=\varphi[1, n-1,1]+\varphi[1, n-1,2] \epsilon, \\
&\lambda_{4}^{n-1} \varphi_{4}=\varphi[2, n-1,1]+\varphi[2, n-1,2] \epsilon, \\
&\quad \vdots \\
&\lambda_{n-3}^{n-1} \varphi_{n-1}=\varphi[1, n-1,1]+\varphi[1, n-1,2] \epsilon+\cdots+\varphi[1, n-1, k-1] \epsilon^{k-1}, \\
&\lambda_{n-2}^{n-1} \varphi_{n}=\varphi[2, n-1,1]+\varphi[1, n-1,2] \epsilon+\cdots+\varphi[2, n-1, k-1] \epsilon^{k-1}.
\end{aligned}
\end{equation}
Taking the same procedure to the other column of $W_{11}$, $W_{12}$ and $W_{21}$. Finally, the even-order semi-degenerate DT formula $q_{n}$ can be obtained through determinant calculation. For the case of $n=2k+1$, remain all the elements of the $(2k+1)$-th row unchanged, we can derive the odd-order semi-degenerate DT formula $q_{n}$  by the same procedure of case $n=2k$.
\end{proof}

{\bf \subsection{Higher-order rogue waves on the periodic background}}

\par The Higher-order rogue waves on periodic background can be generated by odd-order semi-degenerate DT. For $n = 3$, $\lambda_{1}=\frac{1}{2}\sqrt{2a-c^{2}}-\frac{1}{2}ic$, $\lambda_{2}=-\frac{1}{2}\sqrt{2a-c^{2}}-\frac{1}{2}ic$ and $\lambda_{3}=i\beta_{3}$. Here, set $a=1$ and $c=1$ for convenience, the first-order rogue waves $q_{3}$ on the periodic background for Eq. \eqref{ndnls} can be obtained. The patterns of $q_{3}$ are displayed in Fig. \ref{zqgb1} and Fig. \ref{zqgb2}. For $\beta_{3}>0$, the rogue wave pattern locates on the area where the periodic pattern reaches its amplitude. However, for $\beta_{3}<0$, the rogue wave pattern locates in the middle of two amplitude trajectories of the periodic pattern, which looks like that the rogue wave is generated by the interaction of two waves of the periodic pattern.
\begin{figure}[ht!]
\centering
\subfigure[]{\label{zqgb1}
\begin{minipage}[b]{0.4\textwidth}
\includegraphics[width=6.0cm]{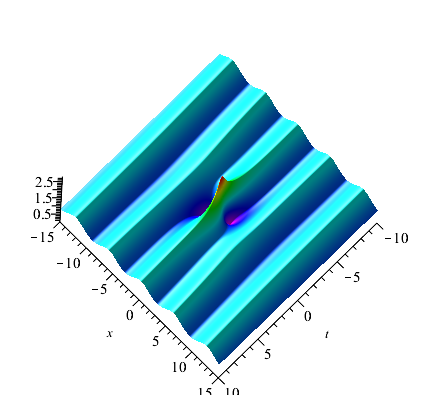} \\
\includegraphics[width=4.5cm]{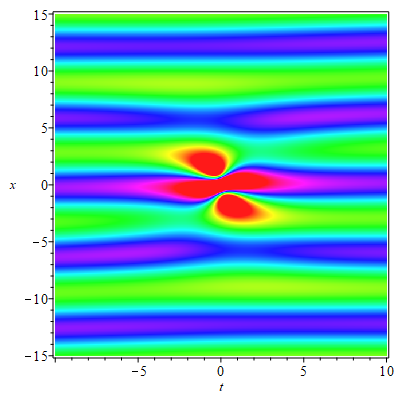}
\end{minipage}
}
\subfigure[]{\label{zqgb2}
\begin{minipage}[b]{0.4\textwidth}
\includegraphics[width=6.0cm]{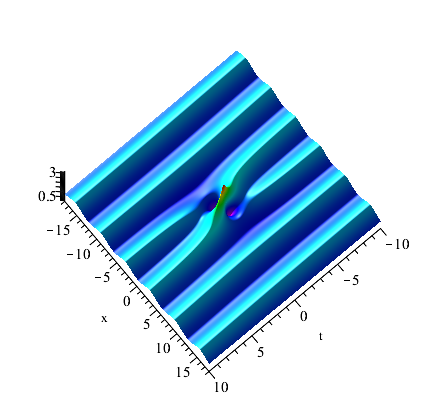} \\
\includegraphics[width=4.5cm]{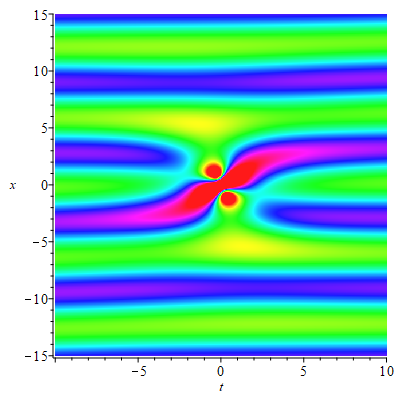}
\end{minipage}
}
\caption{The dynamics of first-order rogue waves on the periodic background: (a) $a=1$, $c=1$, $\beta_{3}=\frac{1}{10}$;
(b) $a=1$, $c=1$, $\beta_{3}=-\frac{1}{10}$. }
\label{1-qb-periodic}
\end{figure}

For $n=5$, $\lambda_{1}=\frac{1}{2}\sqrt{2a-c^{2}}-\frac{1}{2}ic$, $\lambda_{2}=-\frac{1}{2}\sqrt{2a-c^{2}}-\frac{1}{2}ic$ and $\lambda_{5}=i \beta_{5}$,  the second-order rogue waves on the periodic background for Eq. \eqref{ndnls} was shown in Fig. \ref{2-qb-periodic}. The second-order rogue waves have a high amplitude peak on the center distributed with some lower peaks and
four caves. Same with the case $n=3$, for $\beta_{5}>0$, the rogue wave pattern locates on the area where the periodic pattern reaches its amplitude (see Fig. \ref{z2qgb}). For $\beta_{5}<0$, the rogue wave pattern locates in the middle of two amplitude trajectories of the periodic pattern (see Fig. \ref{f2qgb}). And the periodic background can influence the peak value of the rogue wave.

\begin{figure}[ht!]
\centering
\subfigure[]{\label{z2qgb}
\begin{minipage}[b]{0.4\textwidth}
\includegraphics[width=6.0cm]{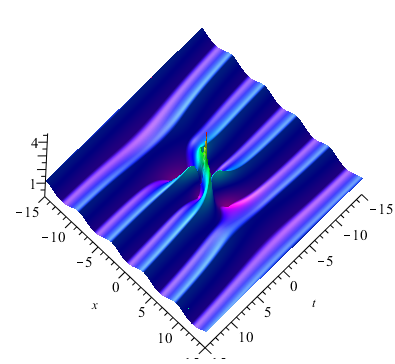} \\
\includegraphics[width=4.8cm]{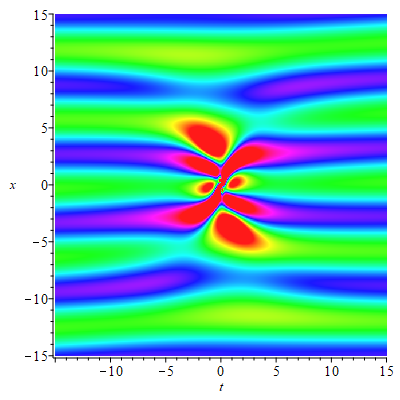}
\end{minipage}
}
\subfigure[]{\label{f2qgb}
\begin{minipage}[b]{0.4\textwidth}
\includegraphics[width=6.0cm]{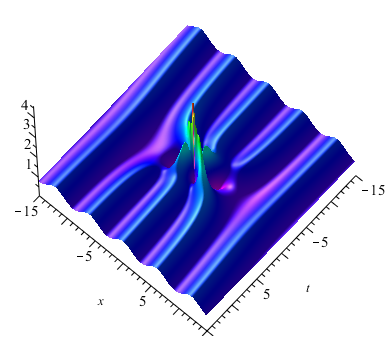} \\
\includegraphics[width=4.8cm]{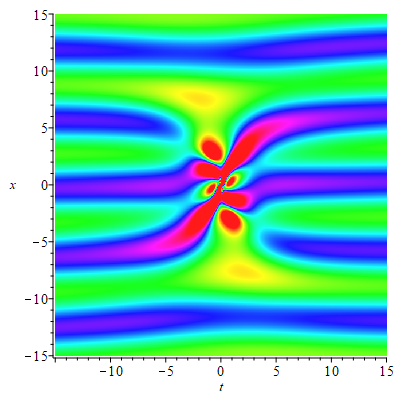}
\end{minipage}
}
\caption{Second-order rogue waves on the periodic background: (a) $a=1$, $c=1$, $\beta_{5}=\frac{1}{10}$;
(b) $a=1$, $c=1$, $\beta_{5}=-\frac{1}{10}$. }
\label{2-qb-periodic}
\end{figure}
  \ \ \ \ \\

{\bf \subsection{Higher-order rogue waves on the double-periodic background}}
\par When $n=4$,  we can obtain the first-order rogue waves on the double-periodic background by \eqref{gdtf}. The selection of parameters have effect both on the amplitude of the double-periodic background and the amplitude of the rogue waves. The interesting thing is that there are two peaks rogue wave on the double-periodic background when we take $a=1$, $c=1$, $\beta_{3}=0.1$ and $\beta_{4}=\frac{\sqrt{2}}{2}$ (see Fig. \ref{2szqgb}). There are four peaks rogue wave on the double-periodic background when  $a=1$, $c=\frac{1}{2}$, $\beta_{3}=0.1$ and $\beta_{4}=\frac{\sqrt{2}}{2}$ (see Fig. \ref{4szqgb}).  Significantly, when $\beta_{3}$=$-\beta_{4}$, the rogue waves on the double-periodic background will convert to the rogue waves on the plane wave (see Fig. \ref{pmgb}). Due to the reverse-space-time reduction conditions of the eq.\eqref{ndnls}, the positions of the rogue wave solutions show the connections between reverse-space-time points $(x, t)$ and $(-x, -t)$.
We can verify this intuitively by observing the positions of two peaks rogue wave and four peaks rogue wave.

\begin{figure}[ht!]
\centering
\subfigure[First-order rogue wave on the double-periodic background with two peaks]{\label{2szqgb}
\begin{minipage}[b]{0.34\textwidth}
\includegraphics[width=4.5cm]{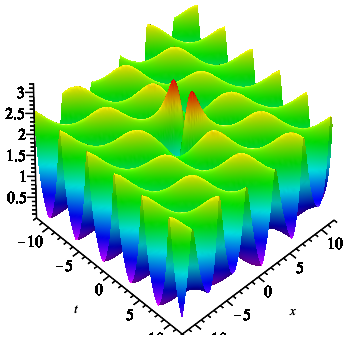}
\end{minipage}}
\subfigure[First-order rogue wave on the double-periodic background with four peaks]{\label{4szqgb}
\begin{minipage}[b]{0.3\textwidth}
\includegraphics[width=4.5cm]{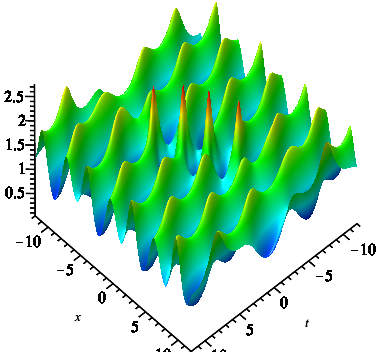}
\end{minipage}}
\subfigure[First-order rogue wave]{\label{pmgb}
\begin{minipage}[b]{0.3\textwidth}
\includegraphics[width=4.6cm]{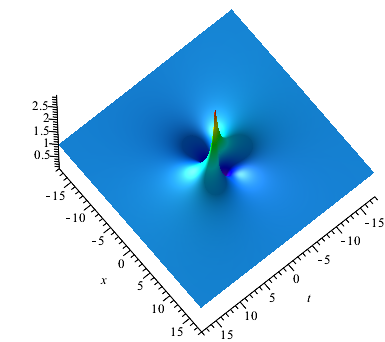}
\end{minipage}}
\caption{The dynamics of first-order rogue wave solution with: (a) $a=1$, $c=1$, $\beta_{3}=0.1$, $\beta_{4}=\frac{\sqrt{2}}{2}$; (b) $a=1$, $c=\frac{1}{2}$, $\beta_{3}=0.1$, $\beta_{4}=\frac{\sqrt{2}}{2}$; (c) $a=1$, $c=1$, $\beta_{3}=\frac{\sqrt{2}}{2}$, $\beta_{4}=-\frac{\sqrt{2}}{2}$.}
\label{q4gb}
\end{figure}

\par When $n=6$, we can obtain the second-order rogue waves on the double-periodic background by \eqref{gdtf}. Similar to the first-order rogue wave on double-periodic background, the selection of parameters also have effect both on the amplitude of the double-periodic background and rogue waves. The positions of the second-rogue waves also show the connections between reverse-space-time points $(x, t)$ and $(-x, -t)$. Compared with first-order rogue waves on the double-periodic background, the difference is that there are one peaks on the double-periodic background when we take $a=1$, $c=1$, $\beta_{5}=0.1$ and $\beta_{6}=\frac{\sqrt{2}}{2}$ (see Fig. \ref{2jgbdp2}). And there are two peaks on the double-periodic background when we take $a=1$, $c=\frac{1}{2}$, $\beta_{5}=0.1$ and $\beta_{6}=\frac{\sqrt{2}}{2}$ (see Fig. \ref{2j2}). See it visually in three dimensions, we can find that the energy centered on the rogue wave and gradually dissipates to a steady state. When $\beta_{5}$=$-\beta_{6}$, the second-order rogue waves on the double-periodic background will convert to the second-order rogue waves on the plane wave (see Fig. \ref{3j2}). In addition, compared with the first-order rogue waves, second-order rogue waves have higher amplitude.

\begin{figure}[ht!]
\centering
\subfigure[Second-order rogue wave on the double-periodic background with one peak]{\label{2jgbdp2}
\begin{minipage}[b]{0.3\textwidth}
\includegraphics[width=4.0cm]{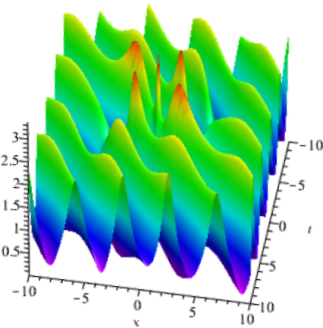}
\end{minipage}}
\subfigure[Second-order rogue wave on the double-periodic background with two peaks]{\label{2j2}
\begin{minipage}[b]{0.32\textwidth}
\includegraphics[width=4.2cm]{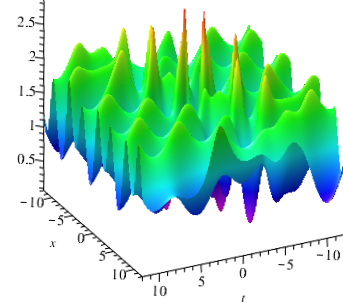}
\end{minipage}}
\subfigure[Second-order rogue wave]{\label{3j2}
\begin{minipage}[b]{0.3\textwidth}
\includegraphics[width=4.8cm]{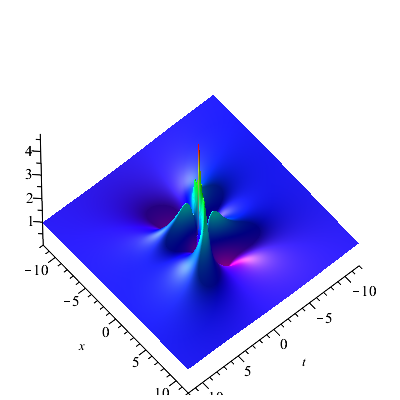}
\end{minipage}}
\caption{The dynamics of second-order rogue wave with:(a) $a=1$, $c=1$, $\beta_{5}=0.1$, $\beta_{6}=\frac{\sqrt{2}}{2}$; (b) $a=1$, $c=\frac{1}{2}$, $\beta_{3}=0.1$, $\beta_{4}=\frac{\sqrt{2}}{2}$; (c) $a=1$, $c=1$, $\beta_{3}=\frac{\sqrt{2}}{2}$, $\beta_{4}=-\frac{\sqrt{2}}{2}$.}
\label{q4gb}
\end{figure}
 \ \ \ \ \ \ \\

\section{Conclusion}
\ \ \ \
In our present investigation, we constructed the breathers and rogue waves on the double-periodic background for Eq. \eqref{ndnls}, which are first generated by plane wave seed solution. The general breathers, Ma breathers and Akmediev breathers on double-periodic background were generated by even-fold DT. Due to the influence of the double-periodic background, the image of Ma breathers solution looks like it disappears in the double-periodic background and the propagation direction of general breathers produces shift.

By using the even-order semi-degenerate DT, we derived the first-order and second-order rogue waves on the double-periodic background. Due to the reverse-space-time reduction conditions, the positions of rogue wave solutions show connections between reverse-space position and reverse-time points $(x, t)$ and $(-x, -t)$. For the first-order rogue waves on the double-periodic background, we find that there are two peaks or four peaks when we adjust the parameters. There are one peak or two peaks on the double-periodic background when adjusting parameters of second-order rogue waves. Second-order rogue waves have higher amplitude than first-order rogue waves. Significantly, the double-periodic background will convert to the plane wave background when $\beta_{n-1}=-\beta_{n}$.

We generated the general breathers, Ma breathers and Akhmediev breathers by odd-fold DT. There are some interesting phenomenons: the crest of the Ma breathers is cut and the phase shift occurs at the center of the general breathers with the perturbation of periodic background. The first-order and second-order rogue waves on the periodic background were derived by odd-order semi-degenerate DT formula, respectively. When $\beta_{n}>0$, the rogue wave patterns are located in the area where the periodic pattern reaches its amplitude. However, when $\beta_{n}<0$, the rogue wave patterns locate in the middle of two amplitude trajectories of the periodic pattern, which looks like that the rogue wave is generated by the interaction of two waves of the periodic pattern. The higher-order rogue waves have a high amplitude peak on the center distributed with some lower peaks and even numbers of caves.

Moreover, as we remarked in the introduction, rogue waves on the periodic and double-periodic background have some important uses and applications in many diverse areas of mathematics and physics. Therefore, the results which are presented in this article are also of great physical significance. For example, the rogue waves on the periodic and double-periodic background can be relevant to diagnostics of rogue waves on the ocean surface and understanding the formation of random waves due to modulation instability.

\end{document}